\documentclass[aps,prl,twocolumn,superscriptaddress,nofootinbib]{revtex4-2}

\usepackage[utf8]{inputenc}
\usepackage{textgreek}
\usepackage[T1]{fontenc}
\usepackage{amsmath,amssymb,mathtools,bm}
\usepackage{amsthm}
\usepackage{microtype}
\usepackage[colorlinks=true,linkcolor=blue,citecolor=blue,urlcolor=blue]{hyperref}
\usepackage{enumitem}
\usepackage{color,xcolor}

\theoremstyle{plain}
\newtheorem{theorem}{Theorem}

\theoremstyle{definition}

\newcommand{\Sgen}{S_{\text{gen}}}

\DeclareMathOperator{\Tr}{Tr}

\newcommand{\Rmnum}[1]{\expandafter\@slowromancap\romannumeral #1@}

\newcommand{\be}{\begin{equation}}
\newcommand{\ee}{\end{equation}}

\newcommand{\I}{\infty}

\renewcommand{\a}{\alpha}	
\renewcommand{\d}{\delta}	

\renewcommand{\l}{\lambda}	

\newcommand{\q}{\theta}	
	
\begin{document}

\title{No off-diagonal quantum focusing for R\'enyi divergences}

\author{Tanay Kibe}
\email{tanay.kibe@ib.edu.ar}
\affiliation{Instituto Balseiro, Centro 
At{\'o}mico Bariloche, S.C. de Bariloche, 8400, 
R{\'i}o Negro, Argentina}
\author{Pratik Roy}
\email{roy.pratik92@gmail.com}
\affiliation{Institute of Mathematics, University of Warsaw, ul. Banacha 2, 02-097 Warsaw, Poland}

\date{\today}
\begin{abstract}
The quantum focusing conjecture is a mathematical expression of the 
idea that semiclassical gravity remains universally attractive.
Its off-diagonal part is a monotonicity condition on the double null shape variation of relative entropy on distinct null generators, and has been argued to follow from strong subadditivity of entanglement entropy. 
Recent proof of a diagonal R\'enyi quantum null energy condition raises the question: does a full R\'enyi focusing statement also hold? We answer this question negatively for any R\'enyi-type divergence satisfying data processing, tensor additivity, and matched classical--quantum conditioning. 
\end{abstract}

\maketitle

\section{Introduction}
The quantum focusing conjecture (QFC) \cite{Bousso:2015mna} is a generalization to the semiclassical regime of the classical focusing theorem, a mathematical formulation of our intuition that gravity is always attractive. In spacetimes satisfying the null curvature condition, one can prove the classical focusing theorem: the expansion, $\q$, of a hypersurface-orthogonal null congruence is non-increasing,
\begin{equation}\label{eq:classical-focusing}
    \frac{d\q}{d\l}\leq0, \qquad \q=\nabla_ak^a = \lim_{A\to0}\frac1A\frac{dA}{d\l}.
\end{equation}
Here, $\l$ is an affine parameter along the congruence with tangent vector $k^a$, and $A$ is the local infinitesimal area element of the congruence.  

Given a codimension-2 spatial surface $\sigma$ that splits a Cauchy surface into two parts, define the orthogonal null hypersurface $N$, which is divided into pencils of infinitesimal width around the null generators of $N$. Denote by $\lambda(y)$ the affine parameter along the pencil passing through point $y$ on $\sigma$. With $\Sgen$ the generalized entropy of a spacetime, \cite{Bousso:2015mna} used the substitution $A\to4G\hbar\Sgen$ in \eqref{eq:classical-focusing} to define a quantum expansion, $\Theta$ in terms of the variation of $\Sgen$ under deformations of $\sigma$ along $N$. 
They used this replacement to conjecture a quantum focusing statement: $\Theta$ is non-increasing along the congruence,
\begin{equation}\label{eq:qfc}
    \frac{\delta\Theta(y_1)}{\delta \l(y_2)}\leq0, \qquad \Theta(y_1)=\frac{4G\hbar}{\sqrt{h(y_1)}}\frac{\delta \Sgen}{\delta \l(y_1)},
\end{equation}
where $\delta/\delta\l(y_i)$ denote variation under a deformation of the entangling null cut $\lambda(y)$ along the affine parameter at transverse location $y_i$ on the null cut, and $\sqrt{h}$ is the area element along the (deformed) cut. 

The off-diagonal QFC, where $\l(y_1)$ and $\l(y_2)$ correspond to distinct transverse locations, was argued in \cite{Bousso:2015mna} to follow from strong subadditivity (SSA) of entanglement entropy.
The diagonal part is more complicated. 
The non-gravitational, i.e., $G_N\to0$, limit of the diagonal QFC was termed the quantum null energy condition (QNEC). 
In this non-gravitational limit the SSA argument is on firm footing and yields the off-diagonal QNEC directly.
Diagonal QNEC has been proven in algebraic quantum field theory (QFT) \cite{Ceyhan:2018zfg,Hollands:2025glm} (see also \cite{Koeller:2015qmn,Balakrishnan:2017bjg} for other proofs) by realising that the relevant null deformations furnish instances of half-sided modular inclusions (HSMIs) \cite{Borchers:1991xk,Wiesbrock:1992mg,Borchers:1995zg,Borchers1996,Araki:2005we}, together with a reformulation of QNEC as the statement that second null shape variations of the relative entropy, $D(\Psi\|\Omega)$, of any state $\Psi$ with respect to the vacuum state $\Omega$, are non-negative,
\begin{equation}\label{eq:qnec}
    \lim_{y'\to y}\frac{\d^2 D(\Psi\|\Omega)}{\d\l(y)\d\l(y')}\geq0.
\end{equation} 
Following the identification of $\Sgen$ with the relative entropy \cite{Witten:2021unn,Chandrasekaran:2022eqq}, it has recently been proposed
\cite{Chandrasekaran:2026pnc} that the QNEC proof of \cite{Ceyhan:2018zfg} can be
imported to establish the diagonal QFC at leading non-trivial order in perturbative quantum gravity on Killing horizon backgrounds.

Motivated by QNEC, \cite{Lashkari:2018nsl} conjectured that a similar statement as \eqref{eq:qnec} should hold also for sandwiched R\'{e}nyi divergence (SRD), $\widetilde D_\a(\Psi\|\Omega)$, a R\'{e}nyi generalization of relative entropy defined first in \cite{Muller-Lennert:2013liu, Wilde:2013bdg}, and generalized to arbitrary von Neumann algebras in \cite{Berta:2016vnw, Jencova:2016tqz,Jencova:2017txf}. 
SRD satisfies the data processing inequality (DPI) in the range $\alpha\in[1/2,1)\cup(1,\I)$, with $\lim_{\a\to 1} \widetilde D_\a=D$. This R\'{e}nyi QNEC (RQNEC) conjecture was proved for free bosonic field theories in \cite{Moosa:2020jwt} for $\a\in[1,\I)$, and counterexamples were shown to exist for all $\a<1$. {A similar free fermion proof was given in \cite{Roy:2022yzm}.} Recently, we proved the RQNEC for all integer $\a\geq2$ using HSMI under the only assumption that the state $\Psi$ has finite SRD $\widetilde D_\a(\Psi\|\Omega)$ \cite{Kibe:2026wsg}. Note that the RQNEC conjecture and our proof are both concerned only with second variations at a single null ray, i.e., the ``diagonal'' part of a putative full RQNEC. 
It is natural to ask if there are also valid R\'{e}nyi generalizations of the off-diagonal QNEC and of the full quantum focusing, not least because QNEC was first derived from the QFC. 

In this Letter, we show that for a general class of R\'{e}nyi divergences, off-diagonal generalizations of QNEC do not hold in general. Since any claimed generalization of quantum focusing should have a well defined non-gravitational $G_N\to0$ limit,
this also shows that no nonzero R\'enyi deformation from our class can satisfy a universal off-diagonal focusing inequality. 
For the standard Petz, sandwiched, and \(\alpha\)-\(z\) \cite{Audenaert:2015npv,Kato:2023aro,Kato:2023hlj,Hiai:2024qve} R\'enyi families, Umegaki relative entropy  is uniquely selected. 
We restrict to divergences $D_\gamma$ satisfying the axioms in the treatment of \cite{Muller-Lennert:2013liu, Tomamichel:2015gtd}, of which the most relevant axioms for our counterexamples are faithfulness, the data processing inequality (DPI), additivity under tensor products of systems, and the existence of a general mean form. The general mean form gives us the so-called classical-quantum conditioning rule: identical classical preparation of a mixture of states decomposes blockwise.
Since null-quantized free field theories decompose into independent transverse pencils \cite{Wall:2011hj}, any universal off-diagonal focusing principle must pass this sector before it can hold generally. The counterexample to off-diagonal generalizations of QNEC for this class of divergences is then built in a sector with finitely many excited null-pencils.

\section{R\'{e}nyi divergences}\label{sec:physical_divergences}

Let \(D_\gamma(\psi\|\omega)\) be a one-parameter family of divergences
on normalized density matrices, with \(\gamma=0\) denoting the affine
limiting member.  For \(\gamma\neq0\) define the moment
\begin{equation}
    Q_\gamma(\psi\|\omega)=\exp[\gamma D_\gamma(\psi\|\omega)] .
    \label{eq:moment}
\end{equation}
We require that $D_\gamma$ is faithful,
\[
    D_\gamma(\psi\|\omega)\ge0,\qquad
    D_\gamma(\psi\|\omega)=0 \iff \psi=\omega ,
\]
satisfies data processing, is additive on tensor products
\eqref{eq:tensor}, and obeys the matched cq conditioning rules
\eqref{eq:flag-renyi} and \eqref{eq:flag-affine} for
\(\gamma\neq0\) and \(\gamma=0\), respectively. 
We also impose
the standard normalization convention of
\cite{Muller-Lennert:2013liu,Tomamichel:2015gtd} to fix the overall scale. We note that these conditions are contained within or follow from the axioms of \cite{Muller-Lennert:2013liu,Tomamichel:2015gtd}. 
Standard Petz, sandwiched, and more generally $\alpha$-$z$ R\'enyi divergences satisfy Eqs.~\eqref{eq:tensor}--\eqref{eq:flag-affine} whenever finite (see End Matter).

Additivity of the divergence under tensor products imposes
\begin{equation}
D_\gamma\!\left(\bigotimes_i\psi_i\Big\|\bigotimes_i\omega_i\right)=\sum_iD_\gamma(\psi_i\|\omega_i).
\label{eq:tensor}
\end{equation}
We further require the classical-quantum (cq) conditioning rule
(see e.g.~\cite{Tomamichel:2015gtd}) under direct-sum of
classical sectors. Let \(\{P_\eta\}\) be mutually orthogonal
projections on a finite register system, and let
\(\chi_\eta\) be normalized density matrices of the register, with
\(\chi_\eta=P_\eta\chi_\eta P_\eta\). For a probability
distribution \(\mu_\eta\), define
\[
\psi=\sum_\eta\mu_\eta\psi_\eta\otimes\chi_\eta,
\qquad
\omega=\sum_\eta\mu_\eta\omega_\eta\otimes\chi_\eta,
\]
with the sum understood as a direct sum over orthogonal
\(\eta\)-sectors. The register itself carries no distinguishability; it only records which branch was prepared. For \(\gamma\neq0\), the cq conditioning rule is
\begin{align}
D_\gamma(\psi\|\omega)
&=
\frac1\gamma
\log\sum_\eta\mu_\eta
\exp\!\left[
\gamma D_\gamma(\psi_\eta\|\omega_\eta)
\right],
\label{eq:flag-renyi}
\end{align}
and for $\gamma=0$, it is
\begin{equation}
D_0(\psi\|\omega)
=
\sum_\eta\mu_\eta D_0(\psi_\eta\|\omega_\eta).
\label{eq:flag-affine}
\end{equation}
We refer to the $\gamma=0$ point as the affine limit, where the log-sum-exp collapses to a sum. For example, the affine limit,  as $\alpha-1=\gamma\to 0$, of the SRD $\widetilde D_{\alpha}(\rho\|\sigma)$ and the \(\alpha\)-\(z\) R\'{e}nyi divergence $\widetilde D_{\alpha,z}(\rho\|\sigma)$ is the Umegaki relative entropy.

\section{Null quantization setup}

Consider a free quantum field theory on
$d>2$ dimensional Minkowski spacetime with metric
\begin{equation}
    ds^2=-dx^+dx^-+d\vec y^{\,2},
\end{equation}
where $x^\pm$ are the null coordinates, and $\vec y$ denote the $d-2$ transverse coordinates.  Consider a function $V(\vec y)\geq0$ that defines a slice of the Cauchy-splitting hypersurface $N: x^-=0$, such that $V(\vec y) = \lambda(\vec y)$ on the null pencil at $\vec y$. The setup is sketched in Fig.~\ref{fig:activation}.  In null quantization, the degrees of freedom to the future of a cut
$V(\vec y)$ factorize over transverse pencils,
\begin{equation}
    \mathcal H_V\simeq \bigotimes_j \mathcal H_{j,V_j},
    \qquad V_j\equiv V(\vec y_j),
    \label{eq:null-factorization}
\end{equation}
up to the usual zero-mode qualifications. See e.g. \cite{Wall:2011hj,Bousso:2015wca,Malik:2019dpg,Moosa:2020jwt,Roy:2022yzm} for applications of null quantization to proofs of the generalized second law, QNEC and R\'{e}nyi QNEC. 
After introducing a UV regulator on each pencil, the vacuum density
matrix correspondingly factorizes as
\begin{equation}
    \omega_V=\bigotimes_j \omega_j(V_j).
    \label{eq:vac-factorization}
\end{equation}
A universal off-diagonal focusing inequality must already hold in the regulated theory with finitely many pencils in excited states. 

\begin{figure*}[t]
\centering
\includegraphics[width=\textwidth]{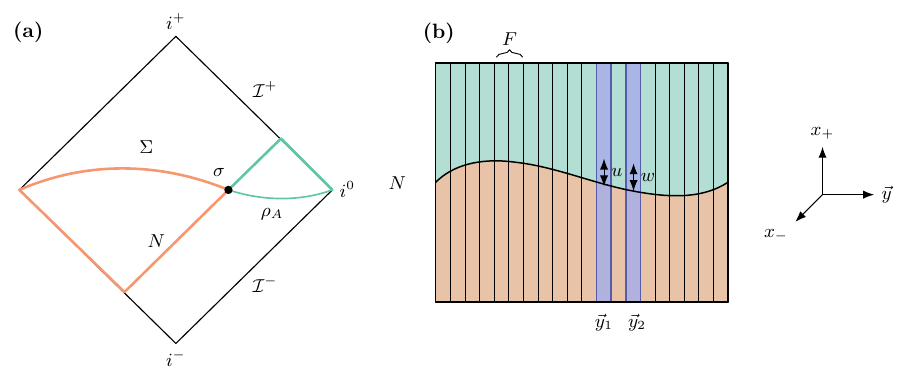}
\caption{(a) Penrose diagram of Minkowski spacetime with a Cauchy surface $\Sigma$ that is split into two by the co-dimension two surface $\sigma$. $N$ is a null hypersurface orthogonal to $\sigma$.  The state $\rho_A$ on the subregion $A$ of $\Sigma$ can be equivalently described on the green part of $N$ together with the green part of future null infinity $\mathcal I^+$. (b) The null hypersurface $N$ is divided into null pencils of infinitesimal width around the null generators of $N$. $N$ is split into two parts (green and orange) by $\sigma$. The split is parametrized by a function $V(\vec y)$ of the transverse coordinates. The two pencils of interest are colored blue and labeled by $\vec{y}_{1,2}$, with null deformation parameters $u,w$. $F$ indicates a subset of the spectator pencils that serve as the classical register.}
\label{fig:activation}
\end{figure*}

We deform two active pencils, located at transverse positions
$\vec y_1,\vec y_2$, by
\begin{equation}
    V_{u,w}(\vec z)
    =
    V(\vec z)
    +u\,v_{\vec y_1}(\vec z)
    +w\,v_{\vec y_2}(\vec z),
    \label{eq:two-pencil-deformation}
\end{equation}
with $v_{\vec y_i}(\vec z) = 1$ in a neighborhood with infinitesimal area around the pencil at $\vec y_i$, and zero otherwise.
Denote the state $\psi$ and the vacuum density matrix $\omega$ restricted to the future of the deformed cut \eqref{eq:two-pencil-deformation} by $\psi(u,w),\omega(u,w)$, and write
\begin{equation}
    D_\gamma(u,w)
    := 
    D_\gamma \! \left(\psi(u,w)\middle \| \omega(u,w)\right).
\end{equation}
For two cut values $u,w\geq 0$, define the off-diagonal finite difference
\begin{equation}
    \Delta_{12}D_\gamma
    :=
    D_\gamma(w,w)+D_\gamma(u,u)
    -D_\gamma(u,w)-D_\gamma(w,u).
    \label{eq:finite-difference}
\end{equation}
An infinitesimal off-diagonal focusing inequality for
$D_\gamma(\psi\|\omega)$ requires this quantity to be non-negative
for all sufficiently small rectangles.  Conversely, a negative finite
difference in the regulated theory implies, under usual continuity
assumptions on the cut dependence, a negative mixed variation
somewhere in the rectangle.

We choose pencils $1,2$ to be the active pencils, and the undeformed cut to be $V(\vec z)=0$.  For each active pencil,
let $\psi_i(c)$ be the reduced density matrix of a chosen one-pencil
excitation on the portion of the pencil to the future of the cut
$x^+=c$, and let $\omega_i(c)$ be the corresponding vacuum density
matrix.  Define
\begin{align}
    d_i(c)
    &:=
    D_\gamma\!\left(\psi_i(c)\middle\|\omega_i(c)\right),
    \label{eq:di-defn}
    \\
    q_i(c)
    &:=
    Q_\gamma\!\left(\psi_i(c)\middle\|\omega_i(c)\right)
    =
    e^{\gamma d_i(c)}
    \qquad (\gamma\neq0).
    \label{eq:qi-defn}
\end{align}
If pencil $i$ is not activated, its state is $\omega_i(c)$ and its
contribution to the divergence is zero.

Choose a finite set $F$ of undeformed spectator
pencils at transverse positions away from $\vec y_1,\vec y_2$ and
outside the supports of the deformation profiles
$v_{\vec y_1},v_{\vec y_2}$.  Denote the remaining undeformed pencils
by $R$.  In the regulated theory
\begin{equation}
    \omega(u,w)
    =
    \omega_1(u)\otimes \omega_2(w)\otimes \omega_F\otimes\omega_R,
    \label{eq:vac-active-spectator}
\end{equation}
where $\omega_F$ and $\omega_R$ are independent of $u,w$.

Choose mutually orthogonal projections
$\{P_\eta\}_{\eta\in E}$ on the spectator-pencil Hilbert space,
where $E\subseteq\{0,1\}^2$, such that
\begin{equation}
    \sum_{\eta\in E}P_\eta=\mathbf 1_F,
    \qquad
    P_\eta P_{\eta'}=\delta_{\eta\eta'}P_\eta,
    \qquad
    [P_\eta,\omega_F]=0.
\end{equation}
Set
\begin{equation}
    \mu_\eta=\operatorname{Tr}(P_\eta\omega_F),
    \qquad
    \chi_\eta=\mu_\eta^{-1}P_\eta\omega_F P_\eta,
    \label{eq:vacuum-blocks}
\end{equation}
omitting any zero-weight block.  Then the spectator vacuum 
has the block decomposition
\begin{equation}
    \omega_F=\sum_{\eta\in E}\mu_\eta\chi_\eta.
    \label{eq:spectator-vacuum-block}
\end{equation}
Thus the label $\eta$ is a classical
block label of undeformed QFT spectator degrees of freedom and not an external register ancilla.

For an activation pattern $\eta=(\eta_1,\eta_2)$ define
\begin{equation}
    \tau_{i,\eta_i}(c)
    =
    \begin{cases}
        \psi_i(c), & \eta_i=1,\\
        \omega_i(c), & \eta_i=0.
    \end{cases}
    \label{eq:tau-defn}
\end{equation}
The state used in the counter-example is the regulated QFT density matrix
\begin{align}
    \psi_\mu(u,w)
    =
    \sum_{\eta\in E}\mu_\eta
    \Big[
        \tau_{1,\eta_1}(u)
        \otimes
        \tau_{2,\eta_2}(w)
        \otimes
        \chi_\eta
    \Big]
    \otimes \omega_R .
    \label{eq:pencil-excited-state}
\end{align}
The reference state in the divergence is the QFT vacuum, decomposed as
\begin{align}
    \omega(u,w)
    =
    \sum_{\eta\in E}\mu_\eta
    \Big[
        \omega_1(u)
        \otimes
        \omega_2(w)
        \otimes
        \chi_\eta
    \Big]
    \otimes\omega_R .
    \label{eq:vacuum-reference-block}
\end{align}
We note that the spectator density matrix of $\psi_\mu(u,w)$ is exactly $\omega_F$, the
same as in the vacuum.  Hence the spectator pencils carry no
distinguishability by themselves; they only store which active-pencil
preparation is correlated with each vacuum block.

Because the same block weights and the same spectator states
$\chi_\eta$ occur in both arguments, the matched cq conditioning rule \eqref{eq:flag-renyi}
applies to the block decomposition. For
$\gamma\neq0$,
\begin{align}
    &Q_\gamma\!\left(\psi_\mu(u,w)\middle\|\omega(u,w)\right)
    =
    \sum_{\eta\in E}\mu_\eta
    q_1(x)^{\eta_1}q_2(y)^{\eta_2},
    \label{eq:active-pencil-Q}
    \\
    &D_\gamma(u,w)
    =
    {1\over\gamma}
    \log
    \sum_{\eta\in E}\mu_\eta
    \exp\!\left[
        \gamma\bigl(\eta_1d_1(u)+\eta_2d_2(w)\bigr)
    \right].
    \label{eq:active-pencil-partition-function}
\end{align}
In the affine limit, with
$p_i:=\sum_{\eta\in E}\mu_\eta\eta_i$, we obtain
\begin{equation}
    D_0(u,w)=p_1d_1(u)+p_2d_2(w).
    \label{eq:relative-flag-affine}
\end{equation}

Writing
\begin{equation}
    r=e^{\gamma d_1(u)},\qquad
    s=e^{\gamma d_2(w)},
\end{equation}
and extending $\mu_\eta$ to all four activation patterns,
we have
\begin{equation}
    D_\gamma(u,w)={1\over\gamma}\log Z(r,s),
    \label{eq:two-pencil-Z}
\end{equation}
with
\begin{equation}
    Z(r,s)=
    \mu_{00}+\mu_{10}r+\mu_{01}s+\mu_{11}rs.
\end{equation}
A direct calculation gives 
\begin{equation}
    {\partial^2\log Z\over
    \partial\log r\,\partial\log s}
    =
    {rs(\mu_{00}\mu_{11}-\mu_{10}\mu_{01})\over Z^2}.
    \label{eq:activation-association}
\end{equation}
The mixed response is thus controlled by the covariance of
the two binary activation variables under the exponentially tilted
block weights.  
The $\gamma=0$ limit is special because Eq.~\eqref{eq:relative-flag-affine}  is separable in the cuts and has no such classical covariance term.

\section{No-go theorem}

We call a one-pencil excitation a seed.  For a fixed divergence
$D_\gamma$, a seed is finite and non-rigid if there are two cut
values $u,w$ on the same pencil such that
\begin{equation}
    d_u:=D_\gamma\!\left(\psi(u)\middle\|\omega(u)\right),
    \qquad
    d_w:=D_\gamma\!\left(\psi(w)\middle\|\omega(w)\right)
\end{equation}
are finite and $d_u\neq d_w$.  The theorem below assumes such a seed exists
for the value of $\gamma$ under consideration.

Such seeds are abundant. By data processing under restriction to later
cuts, \(d(c)=D_\gamma(\psi(c)\|\omega(c))\) is non-increasing in $c$ as the
cut moves to the future, whereas saturation of the DPI is non-generic (see \cite{Petz:1986tvy,Jencova:2017let}). More concretely, choose a bounded local
unitary \(U\) supported between two cuts \(u<w\), and set
\(\psi(c)={\rm tr}_{(-\I,c)}(U^\dagger|\Omega\rangle\langle\Omega| U )\). For cuts later than the
support of \(U\), locality gives \(\psi(c>w)=\omega(c)\), hence
\(d(c>w)=0\). For an earlier cut \(u<c<w\), the excitation is visible, so
\(\psi(u<c<w)\neq\omega(c)\), and faithfulness of the divergence gives
\(d(c)>0\), whenever the quantity is finite. Thus a finite non-rigid seed exists whenever the chosen divergence is
finite on such bounded local excitations. This is the only seed input
used below. For concreteness, one may realize this with a local Weyl unitary in a
chiral current half-line HSMI with the explicit relative-entropy profile
given in \cite{Longo:2018obd}. The proof below only uses the existence of two cuts
with finite unequal values.

\begin{theorem}
Fix a R\'enyi divergence $D_\gamma$ with $\gamma\neq0$ satisfying DPI, faithfulness, and Eqs.~\eqref{eq:tensor}, \eqref{eq:flag-renyi} and \eqref{eq:flag-affine}.
Assume that a finite non-rigid one-pencil seed exists.  Then there is
a density matrix $\psi$ of the regulated QFT, with reference state the
QFT vacuum density matrix $\omega$, such that
\begin{equation}
    \Delta_{12}D_\gamma(\psi\|\omega)<0.
\end{equation}
Hence no universal off-diagonal
focusing inequality for $D_\gamma(\psi\|\omega)$ can hold for
$\gamma\neq0$ in this class.
\end{theorem}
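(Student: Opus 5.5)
The plan is to realize the counterexample with the block-correlated state \eqref{eq:pencil-excited-state}, with the same finite non-rigid seed placed on both active pencils. Then $\Delta_{12}D_\gamma$ becomes a purely classical finite difference of $\frac1\gamma\log Z$, whose sign is set by the sign of $\gamma$ and of the activation covariance $\mu_{00}\mu_{11}-\mu_{10}\mu_{01}$.

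\emph{Step 1 (seed and cuts).} By hypothesis there is a one-pencil excitation $\psi(c)$ and two cut values $u,w$ with $d_u,d_w$ finite and $d_u\neq d_w$. Put a copy of this seed on pencil $1$ and on pencil $2$, so that $d_1(c)=d_2(c)=d(c)$. By \eqref{eq:active-pencil-partition-function}, the four corners of the rectangle in \eqref{eq:finite-difference} are $D_\gamma(a,b)=\frac1\gamma\log Z(e^{\gamma d_a},e^{\gamma d_b})$. Define $f(x,y):=\log Z(e^x,e^y)$, $x_u=\gamma d_u$ and $x_w=\gamma d_w$. Then
\begin{equation*}
\gamma\,\Delta_{12}D_\gamma=f(x_w,x_w)+f(x_u,x_u)-f(x_u,x_w)-f(x_w,x_u)=\int_{x_u}^{x_w}\!\!\int_{x_u}^{x_w}\partial_x\partial_y f\,dx\,dy .
\end{equation*}
Both integration intervals are the same, so their orientations cancel. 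By \eqref{eq:activation-association} the integrand is $e^{x+y}(\mu_{00}\mu_{11}-\mu_{10}\mu_{01})/Z^2$. This has a fixed sign and is strictly nonzero on the nondegenerate square $x_u\neq x_w$. Hence $\operatorname{sign}\Delta_{12}D_\gamma=\operatorname{sign}(\gamma)\cdot\operatorname{sign}(\mu_{00}\mu_{11}-\mu_{10}\mu_{01})$.

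\emph{Step 2 (choice of blocks).} If $\gamma>0$, take $E=\{(1,0),(0,1)\}$, which gives covariance $-\mu_{10}\mu_{01}<0$. If $\gamma<0$, take $E=\{(0,0),(1,1)\}$, which gives covariance $\mu_{00}\mu_{11}>0$. In either case $\Delta_{12}D_\gamma<0$. One can also check this directly. For example, with $\gamma>0$ we have $Z=\mu_{10}r+\mu_{01}s$, and the claim reduces to strict log-concavity along the anti-diagonal, i.e.\ $(\mu_{10}r_w+\mu_{01}s_w)(\mu_{10}r_u+\mu_{01}s_u)<(\mu_{10}r_u+\mu_{01}s_w)(\mu_{10}r_w+\mu_{01}s_u)$. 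This reduces to $-\mu_{10}\mu_{01}(r_w-r_u)^2<0$.

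\emph{Step 3 (legitimacy of the state).} It remains to check that such blocks exist and that \eqref{eq:pencil-excited-state} is an honest normalized density matrix of the regulated QFT. I expect this to be the main point needing care. The regulated spectator vacuum $\omega_F$ of even a single pencil is a faithful mixed state, with more than one eigenvalue. So I split its spectrum into two nonempty spectral projections $P_a,P_b$. These commute with $\omega_F$ and have strictly positive weights $\mu_a,\mu_b$, and I assign them to the two chosen activation patterns. The resulting $\psi_\mu(u,w)$ is then a convex combination of product density matrices supported on orthogonal spectator blocks, so it is a density matrix. Its restrictions to later cuts are obtained pencil by pencil by the partial traces defining $\tau_{i,\eta_i}$, and its spectator marginal is $\omega_F$.

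Finally, I apply \eqref{eq:tensor} to strip off $\omega_R$ and the matched cq rule \eqref{eq:flag-renyi} to each corner, which justifies \eqref{eq:active-pencil-partition-function}. Finiteness of $d_u,d_w$ ensures every term is finite. This gives $\Delta_{12}D_\gamma(\psi\|\omega)<0$. As noted before the theorem, under continuity of the cut dependence this yields a negative mixed variation somewhere in the rectangle, so no universal off-diagonal focusing inequality holds for $\gamma\neq0$.
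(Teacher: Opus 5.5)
Your proposal is correct and uses the same construction as the paper: identical seeds on the two active pencils, exclusive activation blocks for $\gamma>0$ and perfectly correlated blocks for $\gamma<0$, with the corners evaluated via tensor additivity and the matched cq rule. The only difference is that you get the sign in both cases at once by integrating the covariance formula \eqref{eq:activation-association} over the square, which is the alternative the paper itself notes after its proof, whereas the paper checks each case directly (an explicit algebraic identity for $\gamma>0$ and strict concavity of $g_\gamma$ for $\gamma<0$).
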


\begin{proof}
Take two identical copies of the non-rigid seed, one on each active
pencil, so that the one-pencil divergence is the same function
$d(c)$ on pencils $1$ and $2$.  Choose a nontrivial spectator-vacuum
block decomposition with two blocks.  Equivalently, choose a
projection $P$ on the spectator pencils, commuting with $\omega_F$,
such that
\begin{equation}
    0<p:=\operatorname{Tr}(P\omega_F)<1.
\end{equation}
The complementary block has weight $1-p$. 

First suppose $\gamma>0$.  Use the two spectator blocks to implement
the exclusive activation pattern
\begin{equation}
    \mu_{10}=p,\qquad
    \mu_{01}=1-p,\qquad
    \mu_{00}=\mu_{11}=0.
    \label{eq:exclusive-law}
\end{equation}
Then, from \eqref{eq:two-pencil-Z}
\begin{align}
    &D_\gamma(u,u)=d_u, \quad 
    D_\gamma(w,w)=d_w,
    \\
    &D_\gamma(u,w)
    =
    {1\over\gamma}
    \log\!\left[
        p e^{\gamma d_u}
        +(1-p)e^{\gamma d_w}
    \right],
    \\
    &D_\gamma(w,u)
    =
    {1\over\gamma}
    \log\!\left[
        p e^{\gamma d_w}
        +(1-p)e^{\gamma d_u}
    \right].
\end{align}
Let
\begin{equation}
    r=e^{\gamma d_u},
    \qquad
    s=e^{\gamma d_w}.
\end{equation}
Since $d_u\neq d_w$ and $\gamma\neq0$, we have $r\neq s$.  Moreover,
\begin{align}
\begin{aligned}
    \left(p r+(1-p)s\right)
 \left(p s+(1-p)r\right)-rs=\\
p(1-p)(r-s)^2>0.
\end{aligned}
\end{align}
Therefore
\begin{multline}
    \gamma\,\Delta_{12}D_\gamma
    =\\
    \qquad\log(rs)
    -
    \log\!\left(
        [p r+(1-p)s][p s+(1-p)r]
    \right)
    <0.
\end{multline}
Since $\gamma>0$, this gives $\Delta_{12}D_\gamma<0.$.

Now suppose $\gamma<0$.  Use the two spectator blocks to implement
the positively correlated activation pattern
\begin{equation}
    \mu_{00}=p,\qquad
    \mu_{11}=1-p,\qquad
    \mu_{10}=\mu_{01}=0.
    \label{eq:correlated-law}
\end{equation}
Then
\begin{equation}
    D_\gamma(u,w)=g_\gamma(d(u)+d(w)),
\end{equation}
where
\begin{equation}
    g_\gamma(t)
    :=
    {1\over\gamma}
    \log\!\left[
        p+(1-p)e^{\gamma t}
    \right].
\end{equation}
A direct derivative gives
\begin{equation}
    g_\gamma''(t)
    =
    \gamma\,
    {p(1-p)e^{\gamma t}
    \over
    \left[p+(1-p)e^{\gamma t}\right]^2}
    <0,
    \qquad \gamma<0.
\end{equation}
Thus $g_\gamma$ is strictly concave.  Since $d_u\neq d_w$,
\begin{align}
    \Delta_{12}D_\gamma
    &=
    g_\gamma(2d_w)+g_\gamma(2d_u)
    -2g_\gamma(d_u+d_w) <0.
    \label{eq:negative-gamma-violation}
\end{align}
This excludes every finite nonzero $\gamma$.

By contrast, at $\gamma=0$, Eq.~\eqref{eq:relative-flag-affine} gives
\begin{equation}
    D_0(u,w)=p_1d(u)+p_2d(w),
\end{equation}
which is separable in the two cuts.  Hence
\begin{equation}
    \Delta_{12}D_0=0
\end{equation}
for the same spectator-block constructions.
\end{proof}
Note that the obstruction to off diagonal R\'enyi focusing can also be seen from the classical covariance term  \eqref{eq:activation-association}, using the above activation pattern.

We also note that the same finite-pencil test excludes the max-relative entropy endpoint  ($\lim_{\a\to \I} \widetilde D_\a$) of the SRD,
provided a finite non-rigid $D_{\max}$ seed exists.
In this case, we get
\begin{equation}
        \Delta_{12}D_{\max}
        =
        d_1+d_2-2\max\{d_1,d_2\}
        =
        -|d_1-d_2|<0.
        \label{eq:dmax-violation}
\end{equation}
Thus the obstruction is not an artifact of finite $\gamma$.

\section{Discussion}

We have shown that no member of our class of Rényi divergences -- those that are faithful, satisfy DPI, are additive on tensor products and condition blockwise on matched classical preparation registers -- admits an off-diagonal focusing statement. 
Assuming that there exists a single pencil excitation having finite R\'{e}nyi divergence that changes as one moves the null cut, we constructed states that violate the desired focusing inequality. 

Combined with existing positive results, this yields an operational characterization of relative entropy. 
Non-negativity of its mixed null variations follows from strong subadditivity \cite{Bousso:2015mna}, its diagonal variations obey QNEC \cite{Ceyhan:2018zfg,Hollands:2025glm}, and the full quantum focusing conjecture has been argued for at leading order in perturbative quantum gravity on Killing horizon backgrounds \cite{Chandrasekaran:2026pnc}. 
Our theorem shows that within the R\'{e}nyi class no other non-affine divergence can enter such a statement. We emphasize that the class matters: tensor additivity alone is too weak to single out relative entropy, and we make no claim about arbitrary functionals outside our axioms. 
Within them, however, only affine limits can survive. For the standard R\'enyi families with known operational interpretations, there is exactly one candidate entropic ingredient for a universal statement that quantum gravity is attractive: the quantum expansion built from generalized entropy, Eq.~\eqref{eq:qfc}.

Our no-go leaves the diagonal question open. The recent proof of the diagonal (sandwiched) Rényi QNEC for integer orders \cite{Kibe:2026wsg} establishes the $G_N\to0$ limit of a possible diagonal R\'enyi focusing statement, and our counterexample cannot obstruct it. Whether a diagonal sandwiched-Rényi QFC holds once gravitational backreaction is included remains an interesting open problem.

It is also interesting to ask whether other divergences can support useful null-energy or focusing statements. A natural example is the Belavkin--Staszewski (BS) relative entropy \cite{Belavkin1982CalgebraicGO}, which satisfies all of our axioms and is not excluded by our counter-example since it is affine on matched-classical-quantum conditioning . 
However, a focusing statement based on any R\'enyi generalization of the BS entropy (see for example \cite{Bergh:2021hzf}) satisfying our assumptions is excluded. We postpone an analysis of null-energy statements involving the BS relative entropy for future work.

\textbf{Acknowledgements.} We thank Stefan Hollands for helpful discussions on quantum focussing. T.K. is supported by a Simons Foundation fellowship through the Targeted Grant to Instituto Balseiro. The work of P.R. has been supported by the Polish National Science Centre through Sonata grant (2022/47/D/ST2/02058).

\clearpage

\begin{center}
{\large\bf End Matter}
\end{center}

\setcounter{equation}{0}
\renewcommand{\theequation}{A\arabic{equation}}
\renewcommand{\theHequation}{supp.\arabic{equation}}
\setcounter{theorem}{0}
\setcounter{proposition}{0}
\setcounter{corollary}{0}
\setcounter{lemma}{0}

\section{Classical--quantum conditioning \\ for
\texorpdfstring{$\alpha$-$z$}{alpha-z} R\'enyi divergence}

We record the block calculation underlying the cq conditioning rule used
in the main text.  

Let \(F\) be a finite-dimensional register system with mutually orthogonal
projections \(\{P_\eta\}_{\eta\in E}\),
\begin{equation}
    P_\eta P_{\eta'}=\delta_{\eta\eta'}P_\eta,
    \qquad
    \sum_{\eta\in E}P_\eta=\mathbf 1_F .
\end{equation}
Let \(\chi_\eta\) be normalized density matrices supported on these
orthogonal sectors,
\begin{equation}
    \chi_\eta=P_\eta\chi_\eta P_\eta,
    \qquad
    \Tr\chi_\eta=1 .
\end{equation}
For normalized density matrices \(\psi_\eta,\omega_\eta\) on a quantum system \(A\),
define
\begin{align}
\begin{aligned}
    \psi_{AF}
    &=
    \sum_{\eta\in E}p_\eta\,
    \psi_\eta\otimes\chi_\eta
    =
    \bigoplus_{\eta\in E}
    p_\eta(\psi_\eta\otimes\chi_\eta),
    \\
    \omega_{AF}
    &=
    \sum_{\eta\in E}q_\eta\,
    \omega_\eta\otimes\chi_\eta
    =
    \bigoplus_{\eta\in E}
    q_\eta(\omega_\eta\otimes\chi_\eta).
\end{aligned}
    \label{eq:S-cqstates}
\end{align}
The sums are direct sums over orthogonal \(\eta\)-sectors.  We assume the
usual support and finite-domain conventions for the divergences, and omit zero-weight branches.
The matched case used in the main text is \(p_\eta=q_\eta=\mu_\eta\).  The
identical factors \(\chi_\eta\) carry no distinguishability branch by
branch, and they only realize the classical label.

For \(\alpha,z>0\), \(\alpha\neq1\), the finite-dimensional
\(\alpha\)-\(z\) moment is
\begin{equation}
    Q_{\alpha,z}(\rho\|\sigma)
    =
    \Tr\left[
    \left(
    \sigma^{\frac{1-\alpha}{2z}}
    \rho^{\alpha/z}
    \sigma^{\frac{1-\alpha}{2z}}
    \right)^z
    \right],
    \label{eq:S-alphaz}
\end{equation}
and the corresponding divergence is defined as
\begin{equation}
    D_{\alpha,z}(\rho\|\sigma)
    =
    \frac{1}{\alpha-1}\log Q_{\alpha,z}(\rho\|\sigma).
\end{equation}
The Petz and sandwiched divergences are recovered by the choices
\(z=1\) and \(z=\alpha\), respectively.

First, \(Q_{\alpha,z}\) is multiplicative on tensor products.  Functional
calculus gives
\begin{equation}
\begin{aligned}
    &(\psi_1\otimes\psi_2)^{\alpha/z}
    =
    \psi_1^{\alpha/z}\otimes\psi_2^{\alpha/z},
    \\
    &(\omega_1\otimes\omega_2)^{\frac{1-\alpha}{2z}}
    =
    \omega_1^{\frac{1-\alpha}{2z}}
    \otimes
    \omega_2^{\frac{1-\alpha}{2z}},
\end{aligned}
\end{equation}
and the trace factorizes.  Hence
\begin{equation}
    Q_{\alpha,z}(\psi_1\otimes\psi_2\|
    \omega_1\otimes\omega_2)
    =
    Q_{\alpha,z}(\psi_1\|\omega_1)
    Q_{\alpha,z}(\psi_2\|\omega_2).
    \label{eq:S-tensormoment}
\end{equation}
Equivalently, \(D_{\alpha,z}\) is additive on tensor products.

We now evaluate \(Q_{\alpha,z}\) on the block states
\eqref{eq:S-cqstates}.  Since powers and functional calculus preserve
direct sums,
\begin{align}
&\omega_{AF}^{\frac{1-\alpha}{2z}}
 \psi_{AF}^{\alpha/z}
 \omega_{AF}^{\frac{1-\alpha}{2z}}
\nonumber\\
&\quad =
\bigoplus_{\eta\in E}
(q_\eta\omega_\eta\otimes\chi_\eta)^{\frac{1-\alpha}{2z}}
(p_\eta\psi_\eta\otimes\chi_\eta)^{\alpha/z}
(q_\eta\omega_\eta\otimes\chi_\eta)^{\frac{1-\alpha}{2z}}
\nonumber\\
&\quad =
\bigoplus_{\eta\in E}
p_\eta^{\alpha/z}q_\eta^{(1-\alpha)/z}
\left(
\omega_\eta^{\frac{1-\alpha}{2z}}
\psi_\eta^{\alpha/z}
\omega_\eta^{\frac{1-\alpha}{2z}}
\right)
\otimes
\chi_\eta^{1/z}.
\label{eq:S-blockinside}
\end{align}
Here powers of \(\chi_\eta\) are understood on its support.  Raising a
block diagonal operator to the \(z\)-th power acts block by block, and
\[
    \left(A_\eta\otimes \chi_\eta^{1/z}\right)^z
    =
    A_\eta^z\otimes\chi_\eta .
\]
Taking the trace and using \(\Tr\chi_\eta=1\) gives
\begin{equation}
    Q_{\alpha,z}(\psi_{AF}\|\omega_{AF})
    =
    \sum_{\eta\in E}
    p_\eta^\alpha q_\eta^{1-\alpha}
    Q_{\alpha,z}(\psi_\eta\|\omega_\eta).
    \label{eq:S-cqmoment-general}
\end{equation}
Thus
\begin{align}\label{eq:S-cqD-general}
    D_{\alpha,z}&(\psi_{AF}\|\omega_{AF})\\
    =&
    \frac{1}{\alpha-1}
    \log
    \sum_{\eta\in E}
    p_\eta^\alpha q_\eta^{1-\alpha}
    \exp\left[
    (\alpha-1)D_{\alpha,z}(\psi_\eta\|\omega_\eta)
    \right].        \nonumber
\end{align}
In the matched case \(p_\eta=q_\eta=\mu_\eta\), we get
\begin{equation}
    Q_{\alpha,z}(\psi_{AF}\|\omega_{AF})
    =
    \sum_{\eta\in E}
    \mu_\eta
    Q_{\alpha,z}(\psi_\eta\|\omega_\eta),
    \label{eq:S-flagmoment}
\end{equation}
or equivalently
\begin{align}\label{eq:S-flagD}
     D_{\alpha,z}&(\psi_{AF}\|\omega_{AF}) \\
    &=
    \frac{1}{\alpha-1}
    \log
    \sum_{\eta\in E}
    \mu_\eta
    \exp\!\left[
    (\alpha-1)D_{\alpha,z}(\psi_\eta\|\omega_\eta)
    \right].    \nonumber
\end{align}
This is the matched cq conditioning rule used in the main text.  If, for fixed $z$,
\(\gamma=\alpha-1\) and \(D_\gamma\equiv D_{\alpha,z}\), then
\eqref{eq:S-flagD} is the log-sum-exp law (for $\gamma \neq0$)
\begin{equation}
    D_{\gamma}(\psi_{AF}\|\omega_{AF})
    =
    \frac{1}{\gamma}
    \log
    \sum_{\eta\in E}
    \mu_\eta
    \exp\!\left[
    \gamma D_{\gamma}(\psi_\eta\|\omega_\eta)
    \right].
    \label{eq:S-flagD-gamma}
\end{equation}

This formula is the block version of R\'enyi's general-mean property.
The strictly monotone transform
\begin{equation}
    g_\alpha(t)=\exp[(\alpha-1)t],
    \qquad
    Q_{\alpha,z}(\rho\|\sigma)=g_\alpha(D_{\alpha,z}(\rho\|\sigma)),
\end{equation}
is affine over matched classical alternatives:
\begin{equation}
    g_\alpha\!\left(D_{\alpha,z}(\psi_{AF}\|\omega_{AF})\right)
    =
    \sum_{\eta\in E}
    \mu_\eta\,
    g_\alpha\!\left(
    D_{\alpha,z}(\psi_\eta\|\omega_\eta)
    \right).
\end{equation}
Thus the R\'enyi moment \(Q_{\alpha,z}\), rather than
\(D_{\alpha,z}\) itself, averages over matched classical sectors. 

Setting $z=\alpha$, and taking the limit \(\alpha\to1\) gives the ordinary cq chain rule for
relative entropy.  For the general block states \eqref{eq:S-cqstates} in the matched $p_\eta=q_\eta=\mu_\eta$ case,
\begin{equation}
    D_0(\psi_{AF}\|\omega_{AF})
    =
    \sum_{\eta\in E}
    \mu_\eta D_0(\psi_\eta\|\omega_\eta).
    \label{eq:S-relative-cq-general}
\end{equation}
Thus relative entropy is affine under matched classical conditioning,
whereas nonzero R\'enyi parameters produce the log-sum-exp law.

\bibliographystyle{apsrev4-1}
\bibliography{refs}

@article{Petz:1986tvy,
    author = "Petz, Denes",
    title = "{Sufficient subalgebras and the relative entropy of states of a von Neumann algebra}",
    doi = "10.1007/BF01212345",
    journal = "Commun. Math. Phys.",
    volume = "105",
    number = "1",
    pages = "123--131",
    year = "1986"
}

@article{Borchers:1991xk,
    author = "Borchers, H. J.",
    title = "{The CPT theorem in two-dimensional theories of local observables}",
    reportNumber = "GOET-TP-91-01",
    doi = "10.1007/BF02099011",
    journal = "Commun. Math. Phys.",
    volume = "143",
    pages = "315--332",
    year = "1992"
}

@article{Wiesbrock:1992mg,
    author = "Wiesbrock, H. W.",
    title = "{Half sided modular inclusions of von Neumann algebras}",
    reportNumber = "SFB-288-17",
    doi = "10.1007/BF02098019",
    journal = "Commun. Math. Phys.",
    volume = "157",
    pages = "83--92",
    year = "1993",
    note = "[Erratum: Commun.Math.Phys. 184, 683--685 (1997)]"
}

@article{Borchers1996,
  author  = {Borchers, Hans-Jürgen},
  title   = {Half-sided modular inclusion and the construction of the Poincaré group},
  journal = {Commun. Math. Phys.},
  year    = {1996},
  volume  = {179},
  number  = {3},
  pages   = {703--723},
  doi     = {10.1007/BF02100104}}

@article{Borchers:1995zg,
    author = "Borchers, H. J.",
    title = "{On the use of modular groups in quantum field theory}",
    journal = "Ann. Inst. H. Poincare Phys. Theor.",
    volume = "63",
    pages = "331--382",
    year = "1995"
}

@article{Araki:2005we,
    author = "Araki, H. and Zsido, L.",
    title = "{Extension of the structure theorem of Borchers and its application to half-sided modular inclusions}",
    eprint = "math/0412061",
    archivePrefix = "arXiv",
    doi = "10.1142/S0129055X05002388",
    journal = "Rev. Math. Phys.",
    volume = "17",
    pages = "491--543",
    year = "2005"
}

@article{Wall:2011hj,
    author = "Wall, Aron C.",
    title = "{A proof of the generalized second law for rapidly changing fields and arbitrary horizon slices}",
    eprint = "1105.3445",
    archivePrefix = "arXiv",
    primaryClass = "gr-qc",
    doi = "10.1103/PhysRevD.85.104049",
    journal = "Phys. Rev. D",
    volume = "85",
    pages = "104049",
    year = "2012",
    note = "[Erratum: Phys.Rev.D 87, 069904 (2013)]"
}

@article{Audenaert:2015npv,
    author = "Audenaert, Koenraad M. R. and Datta, Nilanjana",
    title = "{alpha-z-relative Renyi entropies}",
    eprint = "1310.7178",
    archivePrefix = "arXiv",
    primaryClass = "quant-ph",
    doi = "10.1063/1.4906367",
    journal = "J. Math. Phys.",
    volume = "56",
    pages = "022202",
    year = "2015"
}

@book{Tomamichel:2015gtd,
    author = "Tomamichel, Marco",
    title = "{Quantum Information Processing with Finite Resources. Mathematical Foundations}",
    eprint = "1504.00233",
    archivePrefix = "arXiv",
    primaryClass = "quant-ph",
    doi = "10.1007/978-3-319-21891-5",
    isbn = "978-3-319-21890-8, 978-3-319-21891-5",
    publisher = "Springer",
    series = "SpringerBriefs in Mathematical Physics",
    volume = "5",
    year = "2016"
}

@article{Belavkin1982CalgebraicGO,
  title={C*-algebraic generalization of relative entropy and entropy},
  author={Viacheslav P. Belavkin and Przemyslaw Staszewski},
  journal={Annales De L Institut Henri Poincare-physique Theorique},
  year={1982},
  volume={37},
  pages={51-58},
  url={https://api.semanticscholar.org/CorpusID:123727205}
}

@article{Bousso:2015mna,
    author = "Bousso, Raphael and Fisher, Zachary and Leichenauer, Stefan and Wall, Aron C.",
    title = "{Quantum focusing conjecture}",
    eprint = "1506.02669",
    archivePrefix = "arXiv",
    primaryClass = "hep-th",
    doi = "10.1103/PhysRevD.93.064044",
    journal = "Phys. Rev. D",
    volume = "93",
    number = "6",
    pages = "064044",
    year = "2016"
}

@article{Bousso:2015wca,
    author = "Bousso, Raphael and Fisher, Zachary and Koeller, Jason and Leichenauer, Stefan and Wall, Aron C.",
    title = "{Proof of the Quantum Null Energy Condition}",
    eprint = "1509.02542",
    archivePrefix = "arXiv",
    primaryClass = "hep-th",
    doi = "10.1103/PhysRevD.93.024017",
    journal = "Phys. Rev. D",
    volume = "93",
    number = "2",
    pages = "024017",
    year = "2016"
}

@article{Koeller:2015qmn,
    author = "Koeller, Jason and Leichenauer, Stefan",
    title = "{Holographic Proof of the Quantum Null Energy Condition}",
    eprint = "1512.06109",
    archivePrefix = "arXiv",
    primaryClass = "hep-th",
    doi = "10.1103/PhysRevD.94.024026",
    journal = "Phys. Rev. D",
    volume = "94",
    number = "2",
    pages = "024026",
    year = "2016"
}

@article{Berta:2016vnw,
    author = "Berta, Mario and Scholz, Volkher B. and Tomamichel, Marco",
    title = "{R\'enyi Divergences as Weighted Non-commutative Vector-Valued $L_p$ -Spaces}",
    eprint = "1608.05317",
    archivePrefix = "arXiv",
    primaryClass = "math-ph",
    doi = "10.1007/s00023-018-0670-x",
    journal = "Annales Henri Poincare",
    volume = "19",
    number = "6",
    pages = "1843--1867",
    year = "2018"
}

@article{Balakrishnan:2017bjg,
	archiveprefix = {arXiv},
	author = {Balakrishnan, Srivatsan and Faulkner, Thomas and Khandker, Zuhair U. and Wang, Huajia},
	doi = {10.1007/JHEP09(2019)020},
	eprint = {1706.09432},
	journal = {JHEP},
	pages = {020},
	primaryclass = {hep-th},
	title = {{A General Proof of the Quantum Null Energy Condition}},
	volume = {09},
	year = {2019}
}

@article{Lashkari:2018nsl,
    author = "Lashkari, Nima",
    title = "{Constraining Quantum Fields using Modular Theory}",
    eprint = "1810.09306",
    archivePrefix = "arXiv",
    primaryClass = "hep-th",
    doi = "10.1007/JHEP01(2019)059",
    journal = "JHEP",
    volume = "01",
    pages = "059",
    year = "2019"
}

@article{Ceyhan:2018zfg,
	archiveprefix = {arXiv},
	author = {Ceyhan, Fikret and Faulkner, Thomas},
	doi = {10.1007/s00220-020-03751-y},
	eprint = {1812.04683},
	journal = {Commun. Math. Phys.},
	number = {2},
	pages = {999--1045},
	primaryclass = {hep-th},
	title = {{Recovering the QNEC from the ANEC}},
	volume = {377},
	year = {2020}
}

@article{Jencova:2016tqz,
    author = "Jencova, Anna",
    title = "{R{\'e}nyi relative entropies and noncommutative $L_p$-spaces}",
    eprint = "1609.08462",
    archivePrefix = "arXiv",
    primaryClass = "quant-ph",
    doi = "10.1007/s00023-018-0683-5",
    journal = "Annales Henri Poincare",
    volume = "19",
    pages = "2513",
    year = "2018"
}

@article{Jencova:2017txf,
    author = "Jen\v{c}ov\'a, Anna",
    title = "{R\'enyi relative entropies and noncommutative $L_p$-spaces II}",
    eprint = "1707.00047",
    archivePrefix = "arXiv",
    primaryClass = "quant-ph",
    doi = "10.1007/s00023-021-01074-9",
    journal = "Annales Henri Poincare",
    volume = "22",
    pages = "3235--3254",
    year = "2021"
}

@article{Malik:2019dpg,
	archiveprefix = {arXiv},
	author = {Malik, Taha A. and Lopez-Mobilia, Rafael},
	doi = {10.1103/PhysRevD.101.066028},
	eprint = {1910.07594},
	journal = {Phys. Rev. D},
	number = {6},
	pages = {066028},
	primaryclass = {hep-th},
	title = {{Proof of the quantum null energy condition for free fermionic field theories}},
	volume = {101},
	year = {2020}
}

@article{Moosa:2020jwt,
    author = "Moosa, Mudassir and Rath, Pratik and Su, Vincent Paul",
    title = "{A R\'enyi quantum null energy condition: proof for free field theories}",
    eprint = "2007.15025",
    archivePrefix = "arXiv",
    primaryClass = "hep-th",
    doi = "10.1007/JHEP01(2021)064",
    journal = "JHEP",
    volume = "01",
    pages = "064",
    year = "2021"
}

@article{Bergh:2021hzf,
    author = "Bergh, Bjarne and Salzmann, Robert and Datta, Nilanjana",
    title = "{The \textalpha{}$\to 1$ Limit of the Sharp Quantum R\'enyi Divergence}",
    eprint = "2102.06576",
    archivePrefix = "arXiv",
    primaryClass = "quant-ph",
    doi = "10.1063/5.0049791",
    journal = "J. Math. Phys.",
    volume = "62",
    pages = "092205",
    year = "2021"
}

@article{Roy:2022yzm,
    author = "Roy, Pratik",
    title = "{Proof of the R\'enyi quantum null energy condition for free fermions}",
    eprint = "2212.02331",
    archivePrefix = "arXiv",
    primaryClass = "hep-th",
    doi = "10.1103/PhysRevD.108.045010",
    journal = "Phys. Rev. D",
    volume = "108",
    number = "4",
    pages = "045010",
    year = "2023"
}

@article{Kato:2023hlj,
    author = "Kato, Shinya",
    title = "{On {\ensuremath{\alpha}}-z-R{\'e}nyi divergence in the von Neumann algebra setting}",
    eprint = "2311.01748",
    archivePrefix = "arXiv",
    primaryClass = "math.OA",
    doi = "10.1063/5.0186552",
    journal = "J. Math. Phys.",
    volume = "65",
    number = "4",
    pages = "042202",
    year = "2024"
}

@article{Wilde:2013bdg,
    author = "Wilde, Mark M. and Winter, Andreas and Yang, Dong",
    title = "{Strong Converse for the Classical Capacity of Entanglement-Breaking and Hadamard Channels via a Sandwiched Renyi Relative Entropy}",
    eprint = "1306.1586",
    archivePrefix = "arXiv",
    primaryClass = "quant-ph",
    doi = "10.1007/s00220-014-2122-x",
    journal = "Commun. Math. Phys.",
    volume = "331",
    number = "2",
    pages = "593--622",
    year = "2014"
}

@article{Chandrasekaran:2022eqq,
    author = "Chandrasekaran, Venkatesa and Penington, Geoff and Witten, Edward",
    title = "{Large N algebras and generalized entropy}",
    eprint = "2209.10454",
    archivePrefix = "arXiv",
    primaryClass = "hep-th",
    doi = "10.1007/JHEP04(2023)009",
    journal = "JHEP",
    volume = "04",
    pages = "009",
    year = "2023"
}

@article{Muller-Lennert:2013liu,
    author = {M{\"u}ller-Lennert, Martin and Dupuis, Fr{\'e}d{\'e}ric and Szehr, Oleg and Fehr, Serge and Tomamichel, Marco},
    title = "{On quantum R{\'e}nyi entropies: A new generalization and some properties}",
    eprint = "1306.3142",
    archivePrefix = "arXiv",
    primaryClass = "quant-ph",
    doi = "10.1063/1.4838856",
    journal = "J. Math. Phys.",
    volume = "54",
    number = "12",
    pages = "122203",
    year = "2013"
}

@article{Kato:2023aro,
  title={A remark on non-commutative {$L^p$}-spaces},
  author={Shinya Kato and Yoshimichi Ueda},
  eprint = {2307.01790},
  archivePrefix = "arXiv",
  primaryClass = "math.OA",
  journal={Studia Mathematica},
  year={2023},
  volume = {275},
  pages = {235-248},
  doi={10.4064/sm230724-11-10}
}

@article{Hiai:2024qve,
    author = "Hiai, Fumio and Jen{\v{c}}ov{\'a}, Anna",
    title = "{$\alpha $-z-R{\'e}nyi Divergences in von Neumann Algebras: Data Processing Inequality, Reversibility, and Monotonicity Properties in $\alpha ,z$}",
    eprint = "2404.07617",
    archivePrefix = "arXiv",
    primaryClass = "quant-ph",
    doi = "10.1007/s00220-024-05124-1",
    journal = "Commun. Math. Phys.",
    volume = "405",
    number = "11",
    pages = "271",
    year = "2024"
}

@article{Hollands:2025glm,
    author = "Hollands, Stefan and Longo, Roberto",
    title = "{A New Proof of the QNEC}",
    eprint = "2503.04651",
    archivePrefix = "arXiv",
    primaryClass = "hep-th",
    doi = "10.1007/s00220-025-05450-y",
    journal = "Commun. Math. Phys.",
    volume = "406",
    number = "11",
    pages = "269",
    year = "2025"
}

@article{Jencova:2017let,
    author = "Jencova, Anna",
    title = "{Preservation of a quantum Renyi relative entropy implies existence of a recovery map}",
    eprint = "1604.02831",
    archivePrefix = "arXiv",
    primaryClass = "quant-ph",
    doi = "10.1088/1751-8121/aa5661",
    journal = "J. Phys. A",
    volume = "50",
    pages = "085303",
    year = "2017"
}

@article{Witten:2021unn,
    author = "Witten, Edward",
    title = "{Gravity and the crossed product}",
    eprint = "2112.12828",
    archivePrefix = "arXiv",
    primaryClass = "hep-th",
    doi = "10.1007/JHEP10(2022)008",
    journal = "JHEP",
    volume = "10",
    pages = "008",
    year = "2022"
}

@article{Chandrasekaran:2026pnc,
    author = "Chandrasekaran, Venkatesa and Flanagan, {\'E}anna {\'E}.",
    title = "{Subregion algebras in classical and quantum gravity}",
    eprint = "2601.07915",
    archivePrefix = "arXiv",
    primaryClass = "hep-th",
    month = "1",
    year = "2026"
}

@article{Longo:2018obd,
    author = "Longo, Roberto",
    title = "{Entropy distribution of localised states}",
    eprint = "1809.03358",
    archivePrefix = "arXiv",
    primaryClass = "hep-th",
    doi = "10.1007/s00220-019-03332-8",
    journal = "Commun. Math. Phys.",
    volume = "373",
    number = "2",
    pages = "473--505",
    year = "2019"
}

@article{Kibe:2026wsg,
    author = "Kibe, Tanay and Roy, Pratik",
    title = "{A general proof of integer R{\'e}nyi QNEC}",
    eprint = "2605.15272",
    archivePrefix = "arXiv",
    primaryClass = "hep-th",
    month = "5",
    year = "2026"
}

\end{document}